\def \C {\mathfrak{C}}
\theoremstyle{definition}
\newtheorem{definition}{Definition}
\newtheorem{proposition}{Proposition}
\newtheorem{corollary}{Corollary}
\newtheorem{example}{Example}
\theoremstyle{remark}
\begin{document}

\title{The lattice of embedded subsets} 

\author{Michel GRABISCH\\
\normalsize Universit\'e de Paris I  -- Panth\'eon-Sorbonne \\
\normalsize     Centre d'Economie de la Sorbonne\\ 
\normalsize 106-112 Bd. de l'H\^opital, 75013 Paris,    France\\
\normalsize email \texttt{michel.grabisch@univ-paris1.fr}\\
\normalsize Tel (+33) 1-44-07-82-85, Fax (+33) 1-44-07-83-01}

\date{}

\maketitle

\mbox{}

\begin{abstract}
In cooperative game theory, games in partition function form are real-valued
function on the set of so-called embedded coalitions, that is, pairs $(S,\pi)$
where $S$ is a subset (coalition) of the set $N$ of players, and $\pi$ is a
partition of $N$ containing $S$. Despite the fact that many studies have been
devoted to such games, surprisingly nobody clearly defined a structure (i.e., an
order) on embedded coalitions, resulting in scattered and divergent works,
lacking unification and proper analysis. The aim of the paper is to fill this
gap, thus to study the structure of embedded coalitions (called here embedded
subsets), and the properties of games in partition function form.
\end{abstract}

\textbf{Keywords:} partition, embedded subset, game, valuation, $k$-monotonicity

\section{Introduction}
The Boolean lattice of subsets and the lattice of partitions are two well-known
posets, with numerous applications in decision, game theory,
classification, etc. 

We consider in this paper a more complex structure, which is in some sense a
combination of the above two. Its origin comes from cooperative game theory. Let
us consider a set $N$ of $n$ players, and define a real-valued function $v$ on
$2^N$, such that $v(\emptyset) = 0$. Such a function is called a \emph{game} on
$N$, and for any coalition (subset) $S\subseteq N$, the quantity $v(S)$
represents the ``worth'' or ``power'' of coalition $S$. Hence a game is a
function on the Boolean lattice $2^N$ vanishing at the bottom element. Having a
closer look at the meaning of $v(S)$, we could say more precisely: suppose that
players in $S$ form a coalition, the other players in $N\setminus S$ forming the
``opponent'' group. Then $v(S)$ is the amount of money earned by $S$ (or the power
of $S$) in such a dichotomic situation. 

A more realistic view would be to consider that the opponent group may also be
divided into groups, say $S_2,\dots, S_k$, so that $\{S_2,\ldots,S_k\}$ form a
partition of $N\setminus S$. In this case it is likely that the value earned by
$S$ may depend on the partition of $N\setminus S$. Hence we are lead to define
the quantity $v(S,\pi)$, where $\pi$ is a partition of $N$ containing $S$ as a
block, i.e., $\pi=\{S,S_2,\ldots,S_k\}$.

Such games are called \emph{games in partition function form}, while $(S,\pi)$
is called an \emph{embedded coalition}, and have been introduced by Thrall and
Lucas \cite{thlu63}. Despite the fact that many works have been undertaken on
this topic (let us cite, among others, Myerson \cite{mye77}, Bolger
\cite{bol89}, Do and Norde \cite{dono02}, Fujinaka \cite{fuj05a}, Clippel and
Serrano \cite{clse05}, Albizuri et al. \cite{alarru05}, Macho-Stadler et
al. \cite{mapewe07}, who all propose various definitions and axiomatizations of
the Shapley value, and Funaki and Yamato \cite{fuya99} who deal with the core,
etc.), surprisingly nobody has clearly defined a structure for embedded
coalitions. As a consequence, most of these works have divergent point of views,
lack unification, and do not provide a good mathematical analysis of the concept
of game in partition function form. This paper tends to fill this gap. We
propose a natural structure for embedded coalitions (which we call
\emph{embedded subsets}), which is a lattice, study it, and provide a variety of
results described below. Our analysis will consider in particular the following
points, all motivated by game theory but also commonly considered in the field
of posets:
\begin{enumerate}
\item The number of maximal chains between two given embedded subsets. Many
  concepts in game theory are defined through maximal chains, like the Shapley
  value \cite{sha53} and the core of convex games \cite{sha71}. This is
  addressed in Section~\ref{sec:emsu}, where a thorough analysis of the poset of
  embedded subsets is done. Based on these results, Grabisch and
  Funaki propose in \cite{grfu08} a definition of the Shapley value, different
  from the ones cited above, and having good
  properties. 
\item The M\"obius function on the lattice of embedded subsets. Usually the
  function obtained by the M\"obius inversion on a game is called the
  \emph{M\"obius transform} of this game, or \emph{the dividends}. It is a
  fundamental notion, permitting to express a game in the basis of unanimity
  games. This topic is addressed in Section~\ref{sec:mob}, and is a new
  achievement in the theory of games in partition function form.
\item Particular classes of games, like additive games, super- and submodular
  games (Section~\ref{sec:fun}), $\infty$-monotone games, also called, up to
  some differences, positive games or belief functions, and minitive games
  (Section~\ref{sec:bel}). Additive games, corresponding to valuations on
  lattices, are basic in game theory since they permit to define the core and
  all procedures of sharing. Super- and submodular functions are also very common
  in game theory and combinatorial optimization. $\infty$-monotone games are
  especially important because in the classical Boolean case they have
  nonnegative dividends (M\"obius transform), and they are well-known in
  artificial intelligence and decision theory under the name of belief
  functions. Lastly, minitive games, also called necessity functions in
  artificial intelligence, are particular belief functions. In poset language,
  they are inf-preserving mappings. Our analysis brings many new results,
  establishing the existence of these particular classes of games.
\end{enumerate}

\section{Background on partitions}\label{sec:part}
In this section, we introduce our notation and recall useful results on the geometric
lattice of partitions (see essentially Aigner \cite{aig79}, and \cite{gest95}),
and prove some results needed in the following.

We consider the set $[n]=:N$, and denote its subsets by $S,T,S', T',\ldots, $
and by $s,t,s', t',\ldots$ their respective cardinalities. The set of partitions
of $[n]$ is denoted by $\Pi(n)$. Partitions are denoted by $\pi,\pi',\ldots$,
and $\pi=\{S_1,\ldots,S_k\}$, $S_1,\ldots,S_k\in2^N$.  Subsets $S_1,\ldots,S_k$
are called \emph{blocks} of $\pi$. A partition into $k$ blocks is a
$k$-partition. 

Taking $\pi,\pi'$ partitions in $\Pi(n)$, we say that $\pi$ is a
\emph{refinement} of $\pi'$ (or $\pi'$ is a \emph{coarsening} of $\pi$), denoted
by $\pi\leq \pi'$, if any block of $\pi$ is contained in a block of $\pi'$ (or
every block of $\pi'$ fully decomposes into blocks of $\pi$). When endowed with
the refinement relation, $(\Pi(n),\leq)$ is a lattice, called the
\emph{partition lattice}.

We use the following shorthands:
$\pi^\top:=\{N\}$, $\pi^\bot:=\{\{1\},\ldots,\{n\}\}$. For any $\emptyset\neq
S\subset N$, $\pi^\top_S:=\{S,N\setminus S\}$,
$\pi^\bot_S:=\{S,\{i_1\},\ldots,\{i_{n-s}\}\}$, with $N\setminus
S=:\{i_1,\ldots,i_{n-s}\}$. Also, for any two partitions $\pi,\pi'$ such that
$\pi\leq\pi'$, the notation $[\pi,\pi']$ means as usual the set of all
partitions $\pi''$ such that $\pi\leq\pi''\leq\pi'$. 

The following facts on $\Pi(n)$ will be useful in the following:
\begin{enumerate}
\item The number of partitions of $k$ blocks is $S_{n,k}$ (Stirling number of
  the second kind), with 
\[
S_{n,k} := \frac{1}{k!}\sum_{i=0}^n(-1)^{k-i}\binom{k}{i}i^n, \quad n\geq
0,k\leq n.
\]
\item Each partition $\pi$ covers $\sum_{S\in \pi}2^{|S|-1}-|\pi|$ partitions. Each
  $k$-partition is covered by $\binom{k}{2}$ partitions.
\item Let $\pi:=\{S_1,\ldots,S_k\}$ be a $k$-partition. Then we have the
following isomorphisms:
\begin{align*}
[\pi,\pi^\top] & \cong \Pi(k)\\
[\pi^\bot,\pi] & \cong \prod_{i=1}^k\Pi(s_i)\\
[\pi,\pi'] & \cong \prod_{i=1}^{|\pi'|}\Pi(m_i) \text{ for some $m_i$'s with }
\sum_{i=1}^{|\pi'|}m_i = k.
\end{align*}
\end{enumerate}

We give the following results (up to our knowledge, some of them have  not yet been
investigated), which will be used in the following. We use the notation
$\mathcal{C}(P)$ to denote the set of maximal chains from bottom to top in the
poset $P$, whenever this makes sense.
\begin{proposition}\label{prop:1}
Let $\pi,\pi'\in \Pi(n)$ such that $\pi'<\pi$, with $\pi:=\{S_1,\ldots,S_k\}$
and $\pi':=\{S_{11},\ldots,S_{1l_1},S_{21},\ldots,S_{2l_2},\ldots,S_{kl_k}\}$, with
$\{S_{i1},\ldots,S_{il_i}\}$ a partition of $S_i$, $i=1,\ldots,k$, and
$k':=\sum_{i=1}^k l_i$. 
\begin{enumerate}
\item The number of maximal chains of $\Pi(n)$ from bottom to top is 
\[
|\mathcal{C}(\Pi(n))| = \frac{n!(n-1)!}{2^{n-1}}.
\]
\item The number of maximal
chains from $\pi^\bot$ to $\pi$ is
\[
|\mathcal{C}([\pi^\bot,\pi])| = \frac{(n-k)!}{2^{n-k}}s_1!s_2!\cdots s_k!.
\]
\item The number of maximal chains from $\pi$ to $\pi^\top$ is 
\[
|\mathcal{C}([\pi,\pi^\top])|=|\mathcal{C}(\Pi(k))|. 
\]
\item The number of maximal chains from $\pi'$ to $\pi$ is
\[
|\mathcal{C}([\pi',\pi])| = \frac{(k'-k)!}{2^{k'-k}}l_1!l_2!\cdots l_k!
\]
\end{enumerate}
\end{proposition}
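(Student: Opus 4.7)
The plan is to treat the four counts in the order (i), (iii), (iv), (ii), because (iv) subsumes (ii) once (i) is available, while (iii) is essentially a transport along an isomorphism already recalled in the background. For (i), I would count directly: each covering in $\Pi(n)$ merges exactly two blocks, so a maximal chain from $\pi^\bot$ to $\pi^\top$ has length $n-1$. At the step where the current partition has $j$ blocks, the number of choices for the next cover is $\binom{j}{2}$ by the background fact that each $j$-partition is covered by $\binom{j}{2}$ partitions. Multiplying,
\[
|\mathcal{C}(\Pi(n))|=\prod_{j=2}^{n}\binom{j}{2}=\prod_{j=2}^{n}\frac{j(j-1)}{2}=\frac{n!(n-1)!}{2^{n-1}}.
\]

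Statement (iii) then follows immediately from the isomorphism $[\pi,\pi^\top]\cong\Pi(k)$ recalled in Section~\ref{sec:part}, since an order-isomorphism transports maximal chains bijectively. For (iv), I would exploit the other isomorphism $[\pi',\pi]\cong\prod_{i=1}^{k}\Pi(l_i)$, in which the factor $\Pi(l_i)$ is a graded poset of rank $l_i-1$ whose number of maximal chains is given by (i) applied to $l_i$. The bridge is the standard shuffle formula for maximal chains in a product of graded posets: if $P_1,\ldots,P_k$ have ranks $r_1,\ldots,r_k$, then
\[
|\mathcal{C}(P_1\times\cdots\times P_k)|=\binom{r_1+\cdots+r_k}{r_1,\ldots,r_k}\prod_{i=1}^{k}|\mathcal{C}(P_i)|,
\]
because a maximal chain in the product is determined by choosing, for each of the $r_1+\cdots+r_k$ covering steps, which factor advances, together with a maximal chain inside each factor independently.

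Substituting $r_i=l_i-1$, so that $\sum r_i=k'-k$, together with the value $|\mathcal{C}(\Pi(l_i))|=l_i!(l_i-1)!/2^{l_i-1}$ obtained from (i), the factors $(l_i-1)!$ cancel the denominator of the multinomial coefficient, while the factors $2^{l_i-1}$ combine into $2^{k'-k}$; this yields exactly $\frac{(k'-k)!}{2^{k'-k}}\,l_1!\cdots l_k!$, which is (iv). Statement (ii) is then the special case $\pi'=\pi^\bot$, for which $l_i=s_i$ and $k'=n$. The only step that is not pure bookkeeping is the shuffle formula above; it is classical but deserves a one-line justification. Beyond that, I do not anticipate any real obstacle: the whole proof reduces to combining the elementary covering count of (i) with the product isomorphisms already listed in the background.
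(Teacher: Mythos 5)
Your proof is correct and follows essentially the same route as the paper: the decisive ingredient in both is the interval isomorphism onto a product of partition lattices combined with the multinomial shuffle count for maximal chains in a product of graded posets. The only differences are cosmetic --- you prove (i) directly by multiplying the cover counts $\binom{j}{2}$ where the paper simply cites Barbut--Monjardet, and you obtain (ii) as the special case $\pi'=\pi^\bot$ of (iv), whereas the paper proves (ii) first and reduces (iv) to it by relabelling inside $\Pi(k')$.
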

\begin{proof}
\begin{enumerate}
\item See Barbut and Monjardet \cite[p. 103]{bamo70}.
\item We use the fact that for any $k$-partition $\pi=\{S_1,\ldots,S_k\}$,
$[\pi^\bot,\pi] = \prod_{i=1}^k \Pi(s_i)$. We have the general following fact:
if $L=L_1\times\cdots\times L_k$, then to obtain all maximal chains in $L$, we
select a $k$-uple of maximal chains $C_1,\ldots,C_k$ in $L_1,\ldots,L_k$
respectively, say of length $c_1,\ldots,c_k$.  Then the number of chains induced
by $C_1,\ldots,C_k$ in $L$ is equal to the number of chains in the lattice
$C_1\times\cdots\times C_k$, isomorphic to the lattice $c_1\times\cdots\times
c_k$ ($c_i$ denotes the linear lattice of $c_i$ elements).  This is
known to be 
\[
\frac{(\sum_{i=1}^{k}c_i)!}{\prod_{i=1}^k( c_i!)}.
\]  
Applied to our case, this gives
\[
|\mathcal{C}([\pi^\bot,\pi])| =
 \prod_{i=1}^k|\mathcal{C}(\Pi(s_i)|\frac{\big(\sum_{i=1}^k
 (s_i-1)\big)!}{\prod_{i=1}^k(s_i-1)!}
\]
which after simplification gives the desired result, using the fact that
$\sum_{i=1}^k(s_i-1)=n-k$. 
\item Immediate from $[\pi,\pi^{\top}]\cong\Pi(k)$.
\item Simply consider
  $S_{11},\ldots,S_{1l_1},S_{21},\ldots,S_{2l_2},\ldots,S_{kl_k}$, and use (ii) for $\Pi(k')$.
\end{enumerate}
\end{proof}

\section{The structure of embedded subsets}\label{sec:emsu}
An \emph{embedded subset} is a pair $(S,\pi)$ where $S\in 2^N\setminus
\{\emptyset\}$ and $\pi\ni S$, where $\pi\in\Pi(n)$.
We denote by
$\mathfrak{C}(N)$ (or by $\mathfrak{C}(n)$) the set of embedded coalitions on $N$. For
the sake of concision, we often denote by $S\pi$ the embedded coalition
$(S,\pi)$, and omit braces and commas for subsets (example with $n=3$:
$12\{12,3\}$ instead of $(\{1,2\},\{\{1,2\},\{3\}\})$). Remark that
$\mathfrak{C}(N)$ is a proper subset of $2^N\times \Pi(N)$.

As mentionned in the introduction, works on games in partition function form do not
explicitly define a structure (that is, some order) on embedded coalitions. A
natural choice is to take the product order on  $2^N\times \Pi(N)$:
\[
(S,\pi) \sqsubseteq (S',\pi') \Leftrightarrow S\subseteq S'\text{ and } \pi\leq\pi'.
\]
Evidently, the top element of this ordered set is $(N,\pi^\top)$ (denoted more
simply by $N\{N\}$ according to our conventions). However, due to the fact that
the empty set is not allowed in $(S,\pi)$, there is no bottom element in the
poset $(\mathfrak{C}(N),\sqsubseteq)$, since all elements of the
form $(\{i\},\pi^\bot)$ are minimal elements. For mathematical convenience, we
introduce an artificial bottom element $\bot$ to $\mathfrak{C}(N)$ (it could be
considered as $(\emptyset,\pi^\bot)$), and denote
$\mathfrak{C}(N)_\bot:=\mathfrak{C}(N)\cup\{\bot\}$. We give as illustration the
partially ordered set $(\mathfrak{C}(N)_\bot,\sqsubseteq)$ with $n=3$
(Fig. \ref{fig:3}).
\begin{figure}[htb]
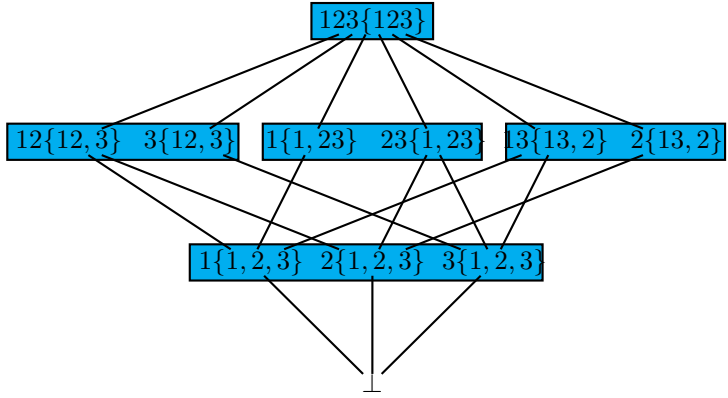

\begin{center}
\psset{unit=0.8cm}
\pspicture(-6,-1)(6,6)
\pspolygon[fillstyle=solid,fillcolor=cyan](-1,5.7)(-1,6.3)(1,6.3)(1,5.7)
\pspolygon[fillstyle=solid,fillcolor=cyan](-6,3.7)(-6,4.3)(-2.2,4.3)(-2.2,3.7)
\pspolygon[fillstyle=solid,fillcolor=cyan](-1.8,3.7)(-1.8,4.3)(1.8,4.3)(1.8,3.7)
\pspolygon[fillstyle=solid,fillcolor=cyan](2.2,3.7)(2.2,4.3)(5.8,4.3)(5.8,3.7)
\pspolygon[fillstyle=solid,fillcolor=cyan](-3,1.7)(-3,2.3)(2.8,2.3)(2.8,1.7)
\rput(0,0){\rnode{0}{\footnotesize $\bot$}}
\rput(-2,2){\rnode{1a}{\footnotesize $1\{1,2,3\}$}}
\rput(0,2){\rnode{2a}{\footnotesize $2\{1,2,3\}$}}
\rput(2,2){\rnode{3a}{\footnotesize $3\{1,2,3\}$}}
\rput(-5,4){\rnode{12b}{\footnotesize $12\{12,3\}$}}
\rput(-3,4){\rnode{3b}{\footnotesize $3\{12,3\}$}}
\rput(-1,4){\rnode{1b}{\footnotesize $1\{1,23\}$}}
\rput(1,4){\rnode{23b}{\footnotesize $23\{1,23\}$}}
\rput(3,4){\rnode{13b}{\footnotesize $13\{13,2\}$}}
\rput(5,4){\rnode{2b}{\footnotesize $2\{13,2\}$}}
\rput(0,6){\rnode{123c}{\footnotesize $123\{123\}$}}
\ncline{0}{1a}
\ncline{0}{2a}
\ncline{0}{3a}
\ncline{1a}{12b}
\ncline{1a}{1b}
\ncline{1a}{13b}
\ncline{2a}{12b}
\ncline{2a}{23b}
\ncline{2a}{2b}
\ncline{3a}{3b}
\ncline{3a}{13b}
\ncline{3a}{23b}
\ncline{12b}{123c}
\ncline{3b}{123c}
\ncline{1b}{123c}
\ncline{23b}{123c}
\ncline{13b}{123c}
\ncline{2b}{123c}
\endpspicture
\end{center}
\caption{Hasse diagram of $(\mathfrak{C}(N)_\bot,\sqsubseteq)$ with
  $n=3$. Elements with the same partition are framed in grey.}
\label{fig:3}
\end{figure}

As the next proposition will show, $(\C(n)_\bot,\sqsubseteq)$ is a lattice, whose main
properties are given below. We investigate in particular the number of maximal
chains between two elements of this lattice. The reason is that, in game theory,
many notions are defined through maximal chains (e.g., the Shapley value, the
core, etc.).  The cases $n=1,2$ are discarded since trivial
($\C(2)_\bot=2^2$). The standard terminology used hereafter can be found in any
textbook on lattices and posets (e.g., \cite{aig79,dapr90,gratz98}).
\begin{proposition}\label{prop:3}
For any $n>2$, $(\mathfrak{C}(n)_\bot,\sqsubseteq)$ is a lattice, with the following properties:
\begin{enumerate}
\item Supremum and infimum are given by
\begin{align*}
(S,\pi)\vee(S',\pi') & = (T\cup T', \rho) \\
(S,\pi)\wedge(S',\pi') & = (S\cap S', \pi\wedge\pi') \text{ if } S\cap
  S'\neq\emptyset, \text{ and }\bot \text{ otherwise},
\end{align*}
where $T,T'$ are blocks of $\pi\vee \pi'$ containing respectively $S$ and $S'$,
and $\rho$ is the partition obtained by merging $T$ and $T'$ in $\pi\vee\pi'$.
\item Top and bottom elements are $N\{N\}$ and $\bot$. Every element is
  complemented; for a given $S\pi$, any embedded subset of the form
$\overline{S}\pi_{\overline{S}}$ with $\overline{S}$ the complement of $S$ and
$\pi_{\overline{S}}$ any partition containing $\overline{S}$, is a complement of
  $S\pi$.  
\item Each element $S\pi$ where $\pi:=\{S,S_2,\ldots,S_k\}$ is a
  $k$-partition is covered by $\binom{k}{2}$ elements, and covers $\sum_{T\in
  \pi}2^{t-1}-|\pi| + 2^{s-1}-1$ elements.
\item Its join-irreducible elements are $(i,\pi^\bot)$, $i\in N$ (atoms), and
  $(i,\pi_{jk}^\bot)$, $i,j,k\in N$, $i\not\in\{j,k\}$. Its  meet-irreducible
  elements are $(S,\pi)$ where $\pi$ is any 2-partition (co-atoms).
\item The lattice satisfies the Jordan-Dedekind chain condition (otherwise said,
  the lattice is ranked), and its height function is $h(S,\pi)=n-k+1$, if $\pi$
  is a $k$-partition. The height of the lattice is $n$.
\item The lattice is not distributive (and even neither upper nor lower locally
distributive), not atomistic (hence not geometric), not modular but upper semimodular.
\item The number of elements on level of height $k$ is $kS_{n,k}$. The total
  number of elements is $\sum_{k=1}^n kS_{n,k} +1$.
\begin{center}
\begin{tabular}{|c|cccccccc|}\hline
$n$ & 1 & 2 & 3 & 4 & 5 & 6 & 7 & 8 \\ \hline
$|\mathfrak{C}(n)_\bot|$ & 2 & 4 & 11 & 38 & 152 & 675 & 3264 & 17008\\ \hline
\end{tabular}
\end{center}
\item Let $S\pi:=S\{S,S_2,\ldots,S_k\}$ and
$S'\pi':=S'\{S',S_{12},\ldots,S_{1l_1},S_{21},\ldots,
S_{2l_2},\ldots,S_{k1},\ldots,S_{kl_k}\}$, and $k':=\sum_{i=1}^kl_i$. We have
the following isomorphisms:
\begin{align*}
[(i,\pi^\bot),N\{N\}]& \cong\Pi(n)\\
[\bot,S\pi] & \cong  \big(\mathfrak{C}(S)\times\Pi(S_2)\times\cdots\times
\Pi(S_k)\big)\cup\{\bot\}\\ 
[S\pi,N\{N\}] & \cong [i\{i,i_2,\ldots,i_k\}, K\{K\}] \cong \Pi(k)\\
[S'\pi', S\pi] &\cong [i\pi^\bot,S\pi]_{\C(k')_\bot},
\end{align*}
 where the subscript $\C(k')_\bot$ means that elements in the brackets are
understood to belong to $\C(k')_\bot$.
\item The number of maximal chains from $\bot$ to $N\{N\}$ is
  $\frac{(n!)^2}{2^{n-1}}$, which is also the number of maximal longest chains
  in $\mathfrak{C}(n)$.
\begin{center}
\begin{tabular}{|r|cccccccc|}\hline
$n$ & 1 & 2 & 3 & 4 & 5 & 6 & 7 & 8 \\ \hline
$|\mathcal{C}(\mathfrak{C}(n)_\bot)|$ & 1 & 2 & 9 & 72 & 900 & 16 200  & 396 900 & 12 700 800 \\ \hline
\end{tabular}
\end{center}
\item Let $S\pi$ be an embedded subset, with $\pi:=\{S,S_2\ldots,S_k\}$,
  and $|S|=s$. The number of maximal chains from $\bot$ to
  $S\pi$ is
\[
|\mathcal{C}([\bot,S\pi])| =  \frac{s(n-k)!}{2^{n-k}}s!s_2!\cdots s_k!.
\]
The number of maximal chains from $(S,\pi)$ to
  $N\{N\}$ is
\[
|\mathcal{C}([(S,\pi), N\{N\}])| =
 \frac{1}{k}|\mathcal{C}(\mathfrak{C}(k)_\bot)| = \frac{k!(k-1)!}{2^{k-1}}.
\]
\item Let $S'\pi'<S\pi$ with the above notation. The number of maximal chains
  from $S'\pi'$ to $S\pi$ is
\[
|\mathcal{C}([S'\pi',S\pi])| = \frac{l_1(k'-k)}{2^{k'-k}}l_1!l_2!\ldots l_k!
\]
\end{enumerate}
\end{proposition}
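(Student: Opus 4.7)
The plan is to transfer the counting problem into the partition lattice $\Pi(k')$ by iterating the two relevant isomorphisms established in item (viii), and then to invoke Proposition~\ref{prop:1}(ii), mirroring exactly the strategy used for (x) but starting one level above the artificial bottom.

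First I would apply the fourth isomorphism of (viii),
\[
[S'\pi',\,S\pi] \;\cong\; [i\pi^\bot,\,\widetilde{S}\widetilde{\pi}]_{\C(k')_\bot},
\]
under which $S'\pi'$ is sent to an atom $(i,\pi^\bot)$ of $\C(k')_\bot$, while $S\pi$ is sent to an embedded subset $\widetilde{S}\widetilde{\pi}$ whose partition $\widetilde{\pi}$ is a $k$-partition of $[k']$ with block sizes $l_1,l_2,\ldots,l_k$ and whose coalition $\widetilde{S}$ is the $l_1$-block containing $i$ (the image of $S'=S_{11}$, collapsed to a singleton). This removes the artificial bottom step and reduces the problem to an atom-to-element count inside the smaller lattice $\C(k')_\bot$.

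Second, I would compose with the third isomorphism of (viii), applied now inside $\C(k')_\bot$, namely $[i\pi^\bot,[k']\{[k']\}]\cong\Pi(k')$. Restricting it to the subinterval with top $\widetilde{S}\widetilde{\pi}$ — which is legitimate because $i\in\widetilde{S}$ by construction — yields $[i\pi^\bot,\widetilde{S}\widetilde{\pi}]\cong[\pi^\bot,\widetilde{\pi}]$ as intervals of $\Pi(k')$. Since lattice isomorphisms preserve the set of maximal chains, the problem becomes a pure question about the partition lattice.

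Finally I would apply Proposition~\ref{prop:1}(ii) to the interval $[\pi^\bot,\widetilde{\pi}]$ in $\Pi(k')$, with $\widetilde{\pi}$ a $k$-partition of block sizes $l_1,\ldots,l_k$. This gives at once
\[
|\mathcal{C}([S'\pi',S\pi])| \;=\; \frac{(k'-k)!}{2^{k'-k}}\,l_1!\,l_2!\cdots l_k!,
\]
which is the announced count (up to the obvious correction of the factorial sign on $k'-k$ in the displayed formula). The only real step requiring care is the compatibility of the two isomorphisms: one must check that under the first identification the atom image of $S'=S_{11}$ indeed lies inside the $l_1$-block of $\widetilde{\pi}$ corresponding to $S=S_1$, so that the atom-based isomorphism with $\Pi(k')$ may then be invoked. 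Beyond that piece of bookkeeping, no genuine obstacle is expected; everything else is a direct transfer of counts across known isomorphisms.
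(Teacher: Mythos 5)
Your argument is correct and follows essentially the same route as the paper's (very terse) proof of item (xi), which simply says ``same as Proposition~\ref{prop:1}~(iv)'': collapse the blocks of $\pi'$ to points of $[k']$ via the last isomorphism of (viii), observe that in the resulting interval the coalition component is determined by the partition component (since the block of $\sigma$ containing the image of $S'$ is unique and automatically contained in the image of $S$), and invoke Proposition~\ref{prop:1}~(ii) on $[\pi^\bot,\widetilde{\pi}]\subseteq\Pi(k')$. The one point you should state more carefully is the comparison with the displayed formula: your count
\[
\frac{(k'-k)!}{2^{k'-k}}\,l_1!\cdots l_k!
\]
differs from the printed one not only by the missing factorial on $k'-k$ (which you note) but also by the absence of the factor $l_1$ (which you do not). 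Your formula is the correct one: the factor $s$ in item (x) arises from the $s$ possible atoms above $\bot$, and no such choice exists when starting from the genuine element $S'\pi'$; consistency with the second assertion of (x) (take $S\pi=N\{N\}$, so $k=1$, $l_1=k'$) and a direct check on, e.g., $[1\{1,2,3\},123\{123\}]$ in $\C(3)_\bot$ (three maximal chains, not nine) confirm that the printed $l_1$ is spurious. So rather than claiming agreement ``up to the factorial'', you should say explicitly that the stated formula contains an error and that your derivation corrects it.
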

\begin{proof}
Consider $(S,\pi),(S',\pi')\in\mathfrak{C}(n)_\bot$. Then $(K,\rho)$ is an upper
bound of both elements iff $K\supseteq S\cup S'$ and $\rho\geq \pi\vee\pi'$
(clearly exists). If $S\cup S'\in\pi\vee\pi'$ then $(S\cup S',\pi\vee\pi')$ is
the least upper bound. If not, since $S\in\pi$ and $S'\in \pi'$ and by
definition of $\pi\vee\pi'$, there exist blocks $T,T'$ of $\pi\vee\pi'$ such
that $T\supseteq S$ and $T'\supseteq S'$. Then $(T\cup T', \rho)$, where $\rho$
is the partition obtained by merging $T$ and $T'$ in $\pi\vee\pi'$, is the least
upper bound of $(S,\pi),(S',\pi')$.

Next, $(S\cap S', \pi\wedge\pi')$ would be the infimum if $S\cap S'$ is a block
of $\pi\wedge\pi'$. If $S\cap S'\neq\emptyset$, then this is the case. If not,
then $\bot$ is the only lower bound. This proves that
$(\mathfrak{C}(n)_\bot,\leq)$ is a lattice, and (i), (ii) hold (the assertion on
complemented elements is clear). 

(iii) Clear from (ii) in Section~\ref{sec:part}. 

(iv) Clear from (iii).

(v) From (iii), $S\pi$ covers $S'\pi'$ implies that if $\pi$ is a
$k$-partition, then $\pi'$ is a $(k+1)$-partition. Hence, a maximal chain from
$S\pi$ to the bottom element has length $n-k+1$, which proves the
Jordan-Dedekind chain condition. Now, the height function is $h(S\pi)=n-k+1$.

(vi) The lattice is not (upper or lower locally) distributive since it contains
diamonds. For example, with $n=3$, the following 5 elements form a diamond (see
Fig.~\ref{fig:3}):
\[
(1,\{1,2,3\}),
(12,\{12,3\}), (1,\{1,23\}), (13,\{13,2\}), (123,\{123\}).
\]
For atomisticity see (iv). Let us prove it is upper semimodular. Since the lattice
is ranked, if $x$ covers $x\wedge y$ and $x\wedge y$, then
both $x$ and $y$ are one level above $x\wedge y$. Hence using (iii), if $x\wedge
y:=(S,\{S,S_2,\ldots,S_k\})$, then $x$ has either the form $(S,\{S,S_i\cup
S_j,\ldots\})$ or $(S\cup S_i,\{S\cup S_i,\ldots\})$, and similarly $y=(S,\{S,S_k\cup
S_l,\ldots\})$ or $(S\cup S_j,\{S\cup S_j,\ldots\})$. To compute $x\vee y$, we
have three cases:
\begin{itemize}
\item $x=(S,\{S,S_i\cup S_j,\ldots\})$ and $y=(S,\{S,S_k\cup S_l,\ldots\})$:
  then $x\vee y = (S,\{S, S_i\cup S_j,S_k\cup S_l,\ldots\})$ if $k,l\neq
  i,j$. If, e.g., $k=i$, $x\vee y = (S,\{S, S_i\cup S_j\cup S_l,\ldots\})$.
\item $x=(S\cup S_i,\{S\cup S_i,\ldots\})$ and $y=(S\cup S_j,\{S\cup
  S_j,\ldots\})$: then $x\vee y = (S\cup S_i\cup S_j,\{S\cup S_i\cup S_j,\ldots\})$.
\item $x=(S,\{S,S_i\cup S_j,\ldots\})$ and $y=(S\cup S_k,\{S\cup
  S_k,\ldots\})$: then $x\vee y=(S\cup S_k,\{S\cup S_k,S_i\cup S_j,\ldots\})$ if
  $k\neq i,j$. If $k=i$, $x\vee y=(S\cup S_i\cup S_j,\{S\cup S_i\cup S_j,\ldots\})$.
\end{itemize}
In all cases, we get a $(k-2)$-partition, so upper modularity holds. Lower
semimodularity  does not hold. Taking the example of $\C(3)_\bot$, $123\{123\}$
covers $12\{12,3\}$ and $3\{3,12\}$, but these elements do not cover
$12\{12,3\}\wedge 3\{12,3\}=\bot$. 

(vii) Clear from the results on $\Pi(n)$.

(viii) Consider the element $(i,\pi^\bot)$ in $\mathfrak{C}(n)_\bot$, $i\in N$. Then
$[(i,\pi^\bot),N\{N\}]$ is a sublattice isomorphic to $\Pi(n)$, since
by (iii) the number of elements covering $(i,\pi^\bot)$ is the same as the
number of elements covering $\pi^\bot$ in $\Pi(n)$, and that this property
remain true for all elements above $(i,\pi^\bot)$.

The other assertions are clear. 

(ix) Since by Prop. \ref{prop:1},
$\mathcal{C}(\Pi(n))=\frac{n((n-1)!)^2}{2^{n-1}}$, and using (viii) and the fact
that there are $n$
mutually incomparable elements  $(i,\pi^\bot)$  in $\mathfrak{C}(n)_\bot$, the
result follows. 

(x) The proof follows the same technique as for Prop. \ref{prop:1} (ii). Using
the second assertion of (viii) and noting that deleting the bottom element does
not change the number of maximal (longest) chains, we can write immediately
\[
|\mathcal{C}([\bot,S\pi])| = \prod_{i=1}^k
 |\mathcal{C}(\Pi(s_i))||\mathcal{C}(\mathfrak{C}(s))|\frac{\Big(\sum_{i=1}^k s_i-1\Big)!}{\prod_{i=1}^k(s_i-1)!}.
\]
The result follows by using Prop. \ref{prop:1} (i)  and (ix).

The second assertion is clear since $[S\pi,N\{N\}]$ is isomorphic to $\Pi(k)$
by (viii).

(xi) Same as for Proposition~\ref{prop:1} (iv).
\end{proof}

\section{Functions on $\C(n)_\bot$}\label{sec:fun}
We investigate properties of some classes of real-valued functions over
$\C(n)_\bot$. As our motivation comes from game theory, we will focus on games,
that is, functions vanishing at the bottom element, and on valuations, which are
related to additive games, another fundamental notion in game theory.
\begin{definition}
A \emph{game in partition function form}  on $N$ (called here for short simply
\emph{game on $\C(N)_\bot$}) is a mapping
$v:\mathfrak{C}(N)_\bot\rightarrow \mathbb{R}$, such that $v(\bot)=0$. The set
of all games in partition function form on $N$ is denoted by $\mathcal{PG}(N)$.
\end{definition}
\begin{definition}
Let $v\in \mathcal{PG}(N)$.
\begin{enumerate}
\item $v$ is \emph{monotone} if $S\pi\sqsubseteq S'\pi'$ implies $v(S\pi)\leq
  v(S'\pi')$. A monotone game on $\C(N)_\bot$ is called a \emph{capacity on
  $\C(N)_\bot$}. A capacity on $\C(N)_\bot$ $v$ is normalized if $v(N\{N\})=1$.
\item $v$ is \emph{supermodular} if for every  $S\pi, S'\pi'$ we have
\[
v(S\pi\vee S'\pi')+v(S\pi\wedge S'\pi') \geq v(S\pi) + v(S'\pi').
\]
It is \emph{submodular} if the reverse inequality holds.
\item A game is \emph{additive} if it is both supermodular and submodular.
\item More generally, for a given $k\geq 2$, a game is \emph{$k$-monotone} if for all
  families of $k$ elements $S_1\pi_1, \ldots,S_k\pi_k$ (not necessarily different), we have
\[
v\big(\bigvee_{i\in K}S_i\pi_i\big) \geq \sum_{J\subseteq
  K,J\neq\emptyset}(-1)^{|J+1|}v\big(\bigwedge_{i\in J}S_i\pi_i\big)
\]
putting $K:=\{1,\ldots,k\}$. A game is \emph{$\infty$-monotone} if it is $k$-monotone
for every $k\geq 2$. Note that $k$-monotonicity implies $k'$-monotonicity for
any $2\leq k'\leq k$.
\item A game is a \emph{belief function} if it is a normalized $\infty$-monotone capacity.
\end{enumerate}
\end{definition}
The following result is due to Barth\'elemy \cite{bar00}.
\begin{proposition}\label{prop:bar}
Let $L$ be a lattice. Then $f$ is monotone and $\infty$-monotone on $L$ if and only if it is
monotone and $(|L|-2)$-monotone. 
\end{proposition}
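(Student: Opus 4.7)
The forward implication is immediate, since $\infty$-monotonicity contains $(|L|-2)$-monotonicity by definition. For the converse, my plan is to reduce every high-order $k$-monotonicity inequality, one step at a time, to a lower-order one (or to a trivial identity) by exploiting a ``special'' entry in the tuple $(x_1,\ldots,x_k)$. Concretely, I will prove three reduction identities for the right-hand side $R_k(x_1,\ldots,x_k):=\sum_{\emptyset\neq J\subseteq[k]}(-1)^{|J|+1}f(\bigwedge_{i\in J}x_i)$ of the defining inequality.

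The three reductions are as follows. (R1) If $x_i=x_j$ with $i\neq j$, I split the family of index sets $J$ according to $J\cap\{i,j\}\in\{\emptyset,\{i\},\{j\},\{i,j\}\}$; using $x_i=x_j$, three of the four cases collapse, and a bookkeeping of signs shows that $R_k$ equals $R_{k-1}$ for the tuple with $x_j$ deleted. The left-hand side $f(\bigvee x_l)$ is unchanged, so $k$-monotonicity on a duplicated tuple is literally the $(k-1)$-monotonicity inequality on the reduced tuple; in particular, duplicating an entry shows $k$-monotonicity $\Rightarrow$ $(k-1)$-monotonicity. (R2) If $x_i=\top_L$, pairing $J\leftrightarrow J\triangle\{i\}$ cancels every term of $R_k$ except the singleton $J=\{i\}$, so $R_k=f(\top_L)$; this matches the left-hand side and the inequality becomes trivial. (R3) If $x_i=\bot_L$, the terms with $i\in J$ are all $\pm f(\bot_L)$ and sum to $f(\bot_L)\cdot\sum_{J'\subseteq[k]\setminus\{i\}}(-1)^{|J'|}=0$ (using $k\geq 2$); the remaining terms constitute $R_{k-1}$ for the tuple with $x_i$ dropped, matching the likewise unchanged left-hand side.

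With these three reductions in hand, the assembly is short. Identity (R1) already gives $(|L|-2)$-monotonicity $\Rightarrow$ $k$-monotonicity for every $2\leq k\leq|L|-2$, by iterated duplication. For $k\geq|L|-1$, I induct on $k$: any $k$-tuple of elements of $L$ must contain either a repetition, or $\top_L$, or $\bot_L$, simply because the ``interior'' $L\setminus\{\top_L,\bot_L\}$ has only $|L|-2<k$ elements. In each of the three cases, (R1)/(R2)/(R3) respectively reduce the $k$-monotonicity inequality to the $(k-1)$-monotonicity inequality (true by the inductive hypothesis) or to a trivial identity. This gives $k$-monotonicity for all $k\geq2$.

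The main obstacle I anticipate is the sign accounting inside (R1): the claim there is an \emph{exact equality} between two signed sums, not merely an inequality, and the required four-way case split on $J\cap\{i,j\}$ is a little fiddly. Reductions (R2) and (R3) are by contrast clean applications of $\sum_{J\subseteq X}(-1)^{|J|}=0$. It is worth remarking that monotonicity of $f$ does not actually enter any of the three reductions; it appears in the hypothesis only because this literature (following Choquet and Barth\'elemy) systematically couples $\infty$-monotonicity with monotonicity.
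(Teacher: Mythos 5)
The paper does not prove this proposition at all: it is imported verbatim from Barth\'elemy \cite{bar00} and used as a black box, so there is no in-paper argument to compare yours against. Your blind proof is, however, correct and self-contained. The three reductions all check out: in (R1) the four-way split on $J\cap\{i,j\}$ gives, for each $J'\subseteq K\setminus\{i,j\}$, contributions $(-1)^{|J'|}f(x_i\wedge\bigwedge_{J'})+(-1)^{|J'|}f(x_i\wedge\bigwedge_{J'})-(-1)^{|J'|}f(x_i\wedge\bigwedge_{J'})$, which collapses to exactly the corresponding term of $R_{k-1}$, so the claimed exact equality holds; (R2) and (R3) are the standard alternating-sum cancellations and the left-hand sides are indeed unaffected. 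The pigeonhole step is the right engine: a $k$-tuple with $k>|L|-2$ distinct-looking entries must contain a repeat, $\top$, or $\bot$, and each case reduces to $(k-1)$-monotonicity or to a tautology, with base case $k=|L|-2$ supplied by hypothesis. Note that your argument in fact proves the stronger statement that $(|L|-2)$-monotonicity alone implies $k$-monotonicity for all $k\geq 2$ (monotonicity of $f$ is never invoked), which subsumes the proposition as stated; the only caveat is the degenerate regime $|L|\leq 3$, where ``$(|L|-2)$-monotone'' is not covered by the paper's definition (which requires $k\geq 2$) and the statement should be read as vacuous or handled separately. Since the paper relies on this proposition in Section~\ref{sec:bel} without justification, supplying a proof of this kind is a genuine addition rather than a rederivation.
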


In lattice theory, a \emph{valuation} (or \emph{2-valuation}) on a lattice $L$
is a real-valued function on $L$ being both super- and submodular
(i.e., it is additive in our terminology). More generally, for a given $k\geq 2$, a
\emph{$k$-valuation} satisfies
\[
v\big(\bigvee_{i\in K}x_i\big) = \sum_{J\subseteq
  K,J\neq\emptyset}(-1)^{|J+1|}v\big(\bigwedge_{i\in J}x_i\big)
\]
for every family of $k$ elements. An $\infty$-valuation is a function $f$ which
is a $k$-valuation for every $k\geq 2$.  The following well-known results
clarify the existence of valuations (see \cite[Ch. X]{bir67}, and also
\cite{bar00}).
\begin{proposition}\label{prop:valu}
Let $L$ be a lattice.
\begin{enumerate}
\item $L$ is modular if and only if it admits a strictly monotone valuation.
\item $L$ is distributive if and only if it admits a strictly monotone
  3-valuation.
\item $L$ is distributive if and only if it is modular and every strictly
  monotone valuation is a $k$-valuation for any $k\geq 2$.  
\item Any lattice admits an $\infty$-valuation.
\end{enumerate}
\end{proposition}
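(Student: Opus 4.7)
The plan is to treat these as classical facts rooted in Birkhoff's forbidden-sublattice characterizations: $L$ is modular iff it contains no pentagon $N_5$, and $L$ is distributive iff in addition it contains no diamond $M_3$. For (i), the forward direction takes the height function $h$, which is strictly monotone and satisfies $h(x\vee y) + h(x\wedge y) = h(x) + h(y)$ on modular lattices thanks to the Jordan-Dedekind chain condition. For the converse, if $L$ were non-modular it would contain $N_5$ with comparable elements $a < b$ and a third element $c$ satisfying $a\vee c = b\vee c$ and $a\wedge c = b\wedge c$; applying the 2-valuation identity to the pair $(a,c)$ and to $(b,c)$ then forces $v(a) = v(b)$, contradicting strict monotonicity.

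For (ii), the forward direction uses the fact that on a finite distributive lattice every valuation decomposes as $v(x) = v(\bot) + \sum_{p \leq x,\, p\text{ join-irreducible}} a(p)$; since the join-irreducibles below $x\vee y$ and below $x\wedge y$ are given by union and intersection of the corresponding downsets, inclusion-exclusion turns any such $v$ into an $\infty$-valuation, in particular a 3-valuation. Conversely, a 3-valuation is also a 2-valuation, so (i) makes $L$ modular; if $L$ were modular but not distributive it would contain an $M_3$, and combining the three pairwise 2-valuation identities with the single 3-valuation identity on its middle elements would force the top and bottom of that $M_3$ to take the same value, again contradicting strict monotonicity. Part (iii) is then a formal consequence of (i) and (ii): distributivity gives modularity and makes every valuation an $\infty$-valuation via the decomposition above; conversely, modularity together with the property that every strictly monotone valuation is a 3-valuation feeds directly into (ii), because modularity supplies such a valuation by (i).

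Finally, (iv) is essentially trivial: the zero function satisfies $v(\bigvee_i x_i) = 0 = \sum_{J} (-1)^{|J|+1} v(\bigwedge_{i\in J} x_i)$ on any lattice and hence is an $\infty$-valuation. The point of (iv) is really to contrast with (i)--(iii), which show that the \emph{strict} monotonicity requirement is what makes higher valuations restrictive. The main obstacle in the proof is the $M_3$ calculation for (ii): one has to verify carefully that the four identities on the five elements of $M_3$ really do force equality of its top and bottom values, so that no strictly monotone 3-valuation can survive on a non-distributive modular lattice. Everything else is bookkeeping or direct appeal to the forbidden-sublattice dichotomy.
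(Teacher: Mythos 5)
Your argument is correct in substance, but note that the paper does not prove Proposition~\ref{prop:valu} at all: it is quoted as a known result with a pointer to Birkhoff (Ch.~X) and Barth\'elemy, so there is no in-paper proof to compare against. What you have reconstructed is essentially the classical argument from those sources: the height function of a modular lattice of finite length is a strictly monotone valuation, the $N_5$ computation kills strict monotonicity in the non-modular case, the Birkhoff join-irreducible representation turns valuations on distributive lattices into $\infty$-valuations by inclusion--exclusion on downsets, and the $M_3$ computation (which you correctly identify as the crux, and which does force $v(\top')=v(\bot')$ from the three pairwise identities plus the triple identity) rules out strictly monotone $3$-valuations on modular non-distributive lattices. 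Two small points you should make explicit: (a) the height-function and join-irreducible arguments need $L$ to be of finite length (harmless here, since the paper only applies the proposition to finite lattices such as $\C(n)_\bot$ and $2^2$); and (b) your step ``a $3$-valuation is a $2$-valuation'' relies on applying the $3$-valuation identity to a family with a repeated element (take $x,y,y$ and simplify), which is legitimate under the paper's convention that families need not consist of distinct elements, but deserves a line of justification rather than being asserted. With those caveats spelled out, the proof is complete and matches the cited literature.
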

The consequence of (i) is that no strictly monotone additive game exists since
$\C(n)_\bot$ is not modular when $n>2$. The
question is: does it exist an additive game? We shall prove that the answer is
no (except for the trivial game $v=0$) as soon as $n>2$.
\begin{proposition}
$\C(2)_\bot$ admits a strictly monotone 2-valuation (hence by Prop.~\ref{prop:bar}, a
  strictly monotone $\infty$-monotone valuation). If $n>2$, the only possible valuations
  on $\C(n)_\bot$ are constant valuations.
\end{proposition}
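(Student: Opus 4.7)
For $n=2$, $\C(2)_\bot$ is isomorphic to the Boolean lattice $2^{\{1,2\}}$, which is distributive and therefore modular. Prop.~\ref{prop:valu}(i) then yields a strictly monotone $2$-valuation on it, and since $|\C(2)_\bot|-2 = 2$, this $2$-valuation is a fortiori $(|L|-2)$-monotone; by Prop.~\ref{prop:bar} it is also $\infty$-monotone.

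For $n > 2$, I plan to prove by induction on $n$ that every valuation $v$ on $\C(n)_\bot$ is constant. Subtracting $v(\bot)$, I may assume $v(\bot) = 0$ and aim to show $v \equiv 0$. I will exploit two devices throughout: (a) any $M_3$ sublattice forces its three middle elements to share a common $v$-value together with $v(\top) + v(\bot_{M_3}) = 2\,v(\text{middle})$; (b) whenever $S \cap S' = \emptyset$ the meet is $\bot$, so the valuation identity reads $v((S,\pi)) + v((S',\pi')) = v((S,\pi) \vee (S',\pi'))$.

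\emph{Base case $n=3$.} For $\{i,j,k\} = \{1,2,3\}$, the five elements $(i,\pi^\bot), (ij,\{ij,k\}), (ik,\{ik,j\}), (i,\{i,jk\}), (N,\{N\})$ form an $M_3$ (Prop.~\ref{prop:3}(vi)). Letting $i$ range over $\{1,2,3\}$, the three resulting diamonds collectively cover the atoms and all six height-$2$ elements, so the valuation identities force all atoms to share a common value $a$ and all height-$2$ elements to share a common value $c$, with $a + v(N\{N\}) = 2c$. The disjoint-support pair $((1,\pi^\bot),(2,\pi^\bot))$ with join $(12,\{12,3\})$ yields $2a = c$, and $((1,\{1,23\}),(2,\{13,2\}))$ with join $(N,\{N\})$ yields $2c = v(N\{N\})$. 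Solving the linear system gives $a = c = v(N\{N\}) = 0$.

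\emph{Inductive step $n \geq 4$.} For each $j \in N$, the set $S_j := \{\bot\} \cup \{(S,\pi) \in \C(n)_\bot : \{j\} \in \pi,\; S \neq \{j\}\}$ is a sublattice of $\C(n)_\bot$ (both $\vee$ and $\wedge$ preserve having $\{j\}$ as a singleton block, and the constraint $S \neq \{j\}$ is preserved by $\vee$ because $j \notin S \cup S'$), isomorphic to $\C(n-1)_\bot$ on $N \setminus \{j\}$ via $(S,\pi) \mapsto (S, \pi \setminus \{\{j\}\})$. By the inductive hypothesis, $v|_{S_j} \equiv 0$, hence $v = 0$ on $\bigcup_j S_j$. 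For a missing element $(S,\pi)$ with $|S| \leq n-2$, pair it with $(N \setminus S, \sigma)$, where $\sigma := \{N \setminus S\} \cup \{\{s\} : s \in S\}$: this second element belongs to $S_l$ for any $l \in S$, hence has $v = 0$; its meet with $(S,\pi)$ is $\bot$, and a direct computation shows their join is $(N,\{N\})$, whence $v((S,\pi)) = v(N\{N\})$. Applied to an atom $(i,\pi^\bot) \in \bigcup_l S_l$, this forces $v(N\{N\}) = 0$, and then $v((S,\pi)) = 0$ for every $(S,\pi)$ with $|S| \leq n-2$. Elements with $|S| = n-1$ automatically lie in $S_y$ where $\{y\} = N \setminus S$, so they also vanish, and $(N,\{N\})$ has just been handled; hence $v \equiv 0$. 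The main obstacle is the careful setup of the sublattice $S_j$ (the exclusion $S \neq \{j\}$ is essential for closure under $\vee$) and the choice of the complement element $(N\setminus S, \sigma)$ enabling the disjoint-support trick; once these are in place, the induction closes cleanly.
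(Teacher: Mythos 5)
Your proof is correct, and it takes a genuinely different route from the paper's in both stages. For the base case $n=3$ the paper writes out an explicit system of thirteen valuation equations and eliminates variables by hand; you instead organize the computation around the three $M_3$ sublattices with bottoms $(i,\pi^\bot)$ together with two disjoint-support pairs, which yields the same conclusion with far less bookkeeping (the five elements you list do form a diamond, and the two joins you invoke are indeed $(12,\{12,3\})$ and $(N,\{N\})$). For the inductive step the paper covers $\C(n+1)_\bot$ by the sublattices $\C[ij]_\bot$ of embedded subsets in which $i$ and $j$ share a block (these miss only the atoms), plus two further copies obtained by adjoining a singleton block to reach the atoms; you instead use only the singleton-block sublattices $S_j$ and then dispatch every remaining element $(S,\pi)$ with $|S|\le n-2$ via the valuation identity on the disjoint pair $(S,\pi)$, $(N\setminus S,\{N\setminus S\}\cup\{\{s\}:s\in S\})$, whose meet is $\bot$ and whose join is $N\{N\}$. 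Your $S_j$ are genuine sublattices (the exclusion $S\neq\{j\}$ and the survival of the block $\{j\}$ under both $\vee$ and $\wedge$ are exactly the points needing verification, and you verify them), and since $\bot\in S_j$ the induction hypothesis pins $v|_{S_j}\equiv 0$ rather than merely ``constant''; evaluating the complementation identity at an atom then forces $v(N\{N\})=0$ and closes the argument. The net effect is an inductive step requiring only one family of isomorphic copies instead of two, at the price of an extra (but easy) join computation; both arguments are sound.
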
 
\begin{proof}
For $n=2$, the result is clear from Proposition~\ref{prop:valu} since
$\C(2)_\bot=2^2$. For $n>2$, we shall proceed by induction on $n$. Let us show the
result for $n=3$. Let $f:\C(3)_\bot\rightarrow \mathbb{R}$. To check whether $f$ is a
valuation amounts to verify that
\[
f(x)+f(y) = f(x\vee y) + f(x\wedge y), \quad\forall x,y\in \C(3), \text{ $x$ and
$y$ not comparable}.
\]
This leads to a linear system, for which the constant function is an obvious
solution. Let us show that it is the only one. We extract the following subsystem,
naming for brevity elements $S\pi$ by $a,b,c,\ldots$ as on Figure~\ref{fig:c3}:
\begin{figure}[htb]
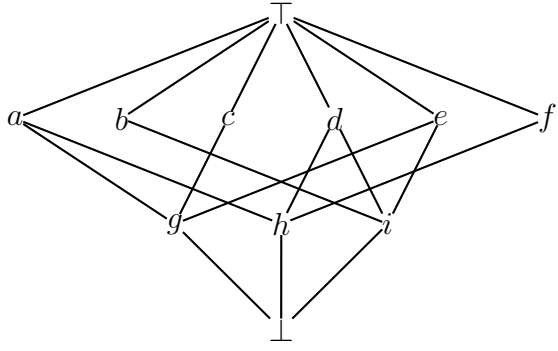

\begin{center}
\psset{unit=0.7cm}
\pspicture(-6,-1)(6,6)
\rput(0,0){\rnode{0}{$\bot$}}
\rput(-2,2){\rnode{1a}{$g$}}
\rput(0,2){\rnode{2a}{$h$}}
\rput(2,2){\rnode{3a}{$i$}}
\rput(-5,4){\rnode{12b}{$a$}}
\rput(-3,4){\rnode{3b}{$b$}}
\rput(-1,4){\rnode{1b}{$c$}}
\rput(1,4){\rnode{23b}{$d$}}
\rput(3,4){\rnode{13b}{$e$}}
\rput(5,4){\rnode{2b}{$f$}}
\rput(0,6){\rnode{123c}{$\top$}}
\ncline{0}{1a}
\ncline{0}{2a}
\ncline{0}{3a}
\ncline{1a}{12b}
\ncline{1a}{1b}
\ncline{1a}{13b}
\ncline{2a}{12b}
\ncline{2a}{23b}
\ncline{2a}{2b}
\ncline{3a}{3b}
\ncline{3a}{13b}
\ncline{3a}{23b}
\ncline{12b}{123c}
\ncline{3b}{123c}
\ncline{1b}{123c}
\ncline{23b}{123c}
\ncline{13b}{123c}
\ncline{2b}{123c}
\endpspicture
\end{center}
\caption{Hasse diagram of $(\mathfrak{C}(3)_\bot,\sqsubseteq)$}
\label{fig:c3}
\end{figure}
\begin{align*}
f(a) + f(b) & = f(\top) + f(\bot)&& (1)\\ 
f(a) + f(c) & = f(\top) + f(g)   && (2)\\ 
f(a) + f(e) & = f(\top) + f(g)   && (3)\\ 
f(a) + f(i) & = f(\top) + f(\bot)&& (4)\\ 
f(b) + f(c) & = f(\top) + f(\bot)&& (5)\\ 
f(b) + f(f) & = f(\top) + f(\bot)&& (6)\\ 
f(b) + f(g) & = f(\top) + f(\bot)&& (7)\\ 
f(b) + f(h) & = f(\top) + f(\bot)&& (8)\\ 
f(c) + f(d) & = f(\top) + f(\bot)&& (9)\\ 
f(c) + f(f) & = f(\top) + f(\bot)&& (10)\\ 
f(c) + f(h) & = f(\top) + f(\bot)&& (11)\\ 
f(f) + f(g) & = f(\top) + f(\bot)&& (12)\\ 
f(f) + f(i) & = f(\top) + f(\bot)&& (13)
\end{align*}
$f(b)=f(i)$ from (1) and (4), $f(c)=f(f)=f(g)=f(h)$ from (5), (6), (7) and
(8), $f(d)=f(f)=f(h)$ from (9), (10) and (11), $f(g)=f(i)$ from (12) and
(13). Then $f(g)=f(\bot)$ from (1) and (2), and hence $f(a)=f(\top)$ from (1),
and $f(e)=f(g)$ from (3). Finally $f(\top)=f(\bot)$ from (5). This proves the
assertion for $n=3$.

Assume the assertion holds till $n$, and let us prove it for $n+1$. The idea is
to cover entirely $\C(n+1)_\bot$ by overlapping copies of $\C(n)_\bot$. Then
since on each copy the valuation has to be constant by assumption, it will be
constant everywhere. Let us consider the set $\C[ij]$ of those embedded
subsets in $\C(n+1)_\bot$ where elements $i,j\in N$ belong to the same block
(i.e., as if $[ij]$ were a single element). Then this set plus the bottom
element $\bot$ of $\C(n+1)_\bot$, which we denote by $\C[ij]_\bot$, is
isomorphic to $\C(n)_\bot$. Moreover, $\C[ij]_\bot$ is a sublattice of
$\C(n+1)_\bot$, since the supremum and infimum in $\C(n+1)_\bot$ of embedded
subsets in $\C[ij]_\bot$ remains in $\C[ij]_\bot$ (because $[ij]$ is never
splitted when taking the infimum and supremum over partitions). This proves that
the system for equations of valuation in $\C[ij]_\bot$ is the same than the set
of equations in $\C(n+1)_\bot$ restricted to $\C[ij]_\bot$. Taking all
possibilities for $i,j$ cover all elements of $\C(n+1)_\bot$, except
atoms. Also, since $\top$ and $\bot$ belongs to each $\C[ij]_\bot$, an overlap
exists. It remains to cover the set of atoms. For this we consider $\C(n)_\bot$,
and to each embedded subset $S\pi$ we add the block $\{n+1\}$ in
$\pi$. Again, this is a sublattice of $\C(n+1)_\bot$ isomorphic to $\C(n)_\bot$,
covering all atoms except $\{n+1\}\pi^\bot$. For this last one, it suffices to
do the same on the set $\{2,\ldots,n+1\}$ and to add the block $\{1\}$ to each
partition.
\end{proof}
Since any game $v$ satisfies $v(\bot)=0$, we have:
\begin{corollary}
The only additive game is the constant
game $v=0$.
\end{corollary}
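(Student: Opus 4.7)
The plan is very short because this corollary is essentially a one-line deduction from the preceding proposition once the terminology is unpacked. First I would observe that, by the definition given earlier in Section~\ref{sec:fun}, a game $v$ is additive precisely when it is both supermodular and submodular, i.e.\ when
\[
v(S\pi\vee S'\pi') + v(S\pi\wedge S'\pi') = v(S\pi) + v(S'\pi')
\]
for all $S\pi, S'\pi' \in \C(N)_\bot$. This is exactly the definition of a (2-)valuation on the lattice $\C(N)_\bot$ in the sense used in Proposition~\ref{prop:valu} and in the preceding proposition.

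Next I would invoke the preceding proposition: for $n>2$, every valuation on $\C(n)_\bot$ is constant. Therefore any additive game $v$ must take a single value $c\in\mathbb{R}$ on all of $\C(n)_\bot$. Finally, since $v$ is a game we have by definition $v(\bot)=0$, so the constant $c$ must equal $0$, giving $v\equiv 0$. For the trivial cases $n=1,2$, the statement should be read as pertaining to the nontrivial regime $n>2$ (consistently with the preceding proposition); otherwise, as noted, $\C(2)_\bot=2^2$ admits non-trivial additive games.

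There is no real obstacle in this argument: the whole content is already contained in the preceding proposition, and the corollary amounts to combining the characterization of valuations with the normalization condition $v(\bot)=0$ built into the notion of a game. The only point requiring a line of care is making explicit that ``additive'' in the game-theoretic terminology of this paper coincides with ``2-valuation'' in the lattice-theoretic terminology, which is immediate from the two definitions.
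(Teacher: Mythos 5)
Your proposal is correct and follows exactly the paper's own (implicit, one-line) argument: the preceding proposition forces any valuation on $\C(n)_\bot$ to be constant for $n>2$, and the normalization $v(\bot)=0$ built into the definition of a game then forces that constant to be $0$. The remark about the $n\leq 2$ cases is a reasonable clarification consistent with the paper's standing assumption that $n>2$.
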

We comment on this surprising result. Our definition of an additive game follows
tradition in the theory of posets: additivity means both supermodularity and
submodularity, and as a consequence, it converts supremum (similar as union) into
addition, provided the elements do not cover a common element different from the bottom
element (similar to elements with an empty intersection). But it does not match
with the traditional view in game theory, where additive games are assimilated
to imputations, hence to values. An imputation is a vector defined on the set of
players, often indicating how the total worth of the game is shared among
players. Incidentally, in the classical setting, both notions coincide. In this
more complex structure, this is no more the case. The nonexistence of additive
games does not imply therefore the absence of imputation or value. It simply
says that it is not possible to have an additivity property (converting supremum
into a sum) for such a game. The consequence is that games in partition function
form cannot be written in a
simpler form, like an additive game which is equivalent to a $n$-dimensional
vector in the classical setting.   

\section{The M\"obius function on $\C(n)_\bot$}\label{sec:mob}
The M\"obius function is a central notion in combinatorics and posets (see
\cite{rot64}). It is also very useful in cooperative game theory, since it leads
to the M\"obius transform (known in this domain as the Harsanyi dividends of a
game), which are the coordinates of a game in the basis of unanimity games (see
end of this section).

First we give the results for the lattice of partitions $\Pi(n)$.
\begin{proposition}\label{prop:mobpart}
Let $\pi,\sigma$ be partitions in $\Pi(n)$ such that $\pi<\sigma$. Let us denote
by $b(\pi)$ the number of blocks of $\pi$, with $n_1,\ldots,n_{b(\pi)}$ the
sizes of the blocks. Then the M\"obius function on $\Pi(n)$ is given by:
\begin{enumerate}
\item $\mu_{\Pi(n)}(\pi^\bot,\{N\}) = (-1)^{n-1}(n-1)!$
\item $\mu_{\Pi(n)}(\pi,\{N\})= (-1)^{b(\pi)-1}(b(\pi)-1)!$
\item $\mu_{\Pi(n)}(\pi^\bot,\pi) = (-1)^{n-b(\pi)}(n_1-1)!\cdots (n_{b(\pi)-1})!$
\item $\mu_{\Pi(n)}(\pi,\sigma) = (-1)^{b(\pi) -
  b(\sigma)}(m_1-1)!\cdots(m_{b(\sigma)}-1)!$, with $m_1,\ldots,m_{b(\pi)}$
  integers such that $\sum_{i=1}^{b(\sigma)}m_i=b(\pi)$.
\end{enumerate}
\end{proposition}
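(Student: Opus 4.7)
The plan is to reduce statements (ii), (iii), (iv) to statement (i), and then to establish (i) by induction via the classical exponential generating function argument.

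For the reduction, I would invoke the isomorphisms recalled in Section~\ref{sec:part}: $[\pi,\pi^\top]\cong\Pi(b(\pi))$, $[\pi^\bot,\pi]\cong\prod_{i=1}^{b(\pi)}\Pi(n_i)$, and more generally $[\pi,\sigma]\cong\prod_{i=1}^{b(\sigma)}\Pi(m_i)$. Since the M\"obius function is isomorphism-invariant and multiplicative across direct products of posets, one immediately gets
$$\mu_{\Pi(n)}(\pi^\bot,\pi)=\prod_{i=1}^{b(\pi)}\mu_{\Pi(n_i)}(\pi^\bot,\pi^\top),\qquad \mu_{\Pi(n)}(\pi,\pi^\top)=\mu_{\Pi(b(\pi))}(\pi^\bot,\pi^\top),$$
and analogously for $\mu_{\Pi(n)}(\pi,\sigma)$. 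Substituting the formula of (i) into the right-hand sides and collecting signs using $\sum(n_i-1)=n-b(\pi)$ and $\sum(m_i-1)=b(\pi)-b(\sigma)$ yields (iii), (ii), (iv) respectively.

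It therefore remains to prove (i). Write $\mu_n:=\mu_{\Pi(n)}(\pi^\bot,\pi^\top)$, with $\mu_1=1$. The defining relation $\sum_{\sigma\in\Pi(n)}\mu_{\Pi(n)}(\pi^\bot,\sigma)=\delta_{n,1}$ combined with the factorization $\mu_{\Pi(n)}(\pi^\bot,\sigma)=\prod_i\mu_{n_i}$ (where $n_1,\ldots,n_k$ are the block sizes of $\sigma$) gives, after multiplying by $x^n/n!$ and summing over $n\geq 1$,
$$e^{F(x)}=\sum_{n\geq 0}\frac{x^n}{n!}\sum_{\sigma\in\Pi(n)}\prod_i\mu_{n_i}=1+x,\qquad F(x):=\sum_{n\geq 1}\mu_n\frac{x^n}{n!},$$
by the exponential formula for labelled set partitions. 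Hence $F(x)=\log(1+x)=\sum_{n\geq 1}(-1)^{n-1}x^n/n$, so $\mu_n=(-1)^{n-1}(n-1)!$.

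The only genuine subtlety I foresee is the bookkeeping at the exponential-formula step, namely that the sum over $\Pi(n)$ of products indexed by block sizes is exactly the $n$th coefficient of $e^{F(x)}$; this is standard but worth noting. As an alternative avoiding generating functions, one could proceed by pure induction: expand $\mu_n=-\sum_{\sigma<\pi^\top}\mu(\pi^\bot,\sigma)$, stratify by block-type $(n_1,\ldots,n_k)$ with $k\geq 2$, and count $n!/(n_1!\cdots n_k!\cdot\prod_j r_j!)$ partitions of each type (where $r_j$ are the multiplicities of block sizes); the resulting combinatorial identity is exactly what the exponential formula encodes, so I would prefer the EGF route for brevity.
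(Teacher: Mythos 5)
Your proof is correct and follows the same route as the paper: parts (ii)--(iv) are deduced from part (i) via the interval isomorphisms $[\pi,\pi^\top]\cong\Pi(b(\pi))$, $[\pi^\bot,\pi]\cong\prod_i\Pi(n_i)$, $[\pi,\sigma]\cong\prod_i\Pi(m_i)$ together with multiplicativity of the M\"obius function over direct products, which is exactly what the paper does. The only difference is that the paper simply cites Aigner for part (i), whereas you supply a self-contained (and correct) exponential-generating-function derivation of $\mu_n=(-1)^{n-1}(n-1)!$; note there is no circularity in using the factorization $\mu_{\Pi(n)}(\pi^\bot,\sigma)=\prod_i\mu_{n_i}$ both in the reduction and in the recursion for (i), since every $\sigma<\pi^\top$ has all block sizes strictly less than $n$, so the recursion is well-founded.
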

\begin{proof}
(i) is proved in Aigner \cite[p. 154]{aig79}. The rest is deduced from the
  isomorphisms given in Section~\ref{sec:part}.
\end{proof}

We recall also the following fundamental result \cite{gest95}.
\begin{proposition}\label{th:mob}
If $P$ is a lattice with bottom element 0 and set of atoms $\mathcal{A}$, for
every $x\in P$, the M\"obius function reads
\[
\mu(0,x) = \sum_{S\subseteq \mathcal{A}\mid \bigvee S=x}(-1)^{|S|}.
\] 
\end{proposition}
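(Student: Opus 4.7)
The plan is to show that the right-hand side of the claimed identity obeys the defining recurrence of $\mu(0,\cdot)$, and then invoke uniqueness of M\"obius inversion. Recall that $\mu(0,\cdot)$ is the unique function on $P$ satisfying $\sum_{z \leq y}\mu(0,z) = \delta_{y,0}$ for every $y \in P$. Accordingly I would set
\[
F(x) \;:=\; \sum_{\substack{S \subseteq \mathcal{A} \\ \bigvee S = x}} (-1)^{|S|},
\]
with the convention $\bigvee \emptyset = 0$, and the task reduces to verifying that $\sum_{x \leq y} F(x) = \delta_{y,0}$ for every $y \in P$.

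First I would swap the order of summation to get
\[
\sum_{x \leq y} F(x) \;=\; \sum_{\substack{S \subseteq \mathcal{A} \\ \bigvee S \leq y}} (-1)^{|S|}.
\]
The condition $\bigvee S \leq y$ is equivalent to requiring every atom of $S$ to lie below $y$, so introducing $\mathcal{A}_y := \{a \in \mathcal{A} : a \leq y\}$, this becomes the standard alternating sum $\sum_{S \subseteq \mathcal{A}_y}(-1)^{|S|}$, which equals $1$ if $\mathcal{A}_y = \emptyset$ and $0$ otherwise.

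The remaining point is to identify when $\mathcal{A}_y = \emptyset$. Since $P$ is finite, any $y > 0$ dominates at least one atom: a minimal element of the nonempty set $(0,y]$ is necessarily an atom. Hence $\mathcal{A}_y = \emptyset$ precisely when $y = 0$, so $\sum_{x \leq y} F(x) = \delta_{y,0}$, and uniqueness of M\"obius inversion forces $F = \mu(0,\cdot)$.

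I do not foresee a serious obstacle; the argument is a clean one-step M\"obius-inversion computation. The two delicate points worth flagging explicitly are (a) the empty-join convention $\bigvee \emptyset = 0$, which makes the case $x = 0$ reduce to the single contribution $S = \emptyset$ and match $\mu(0,0) = 1$; and (b) the finiteness of $P$, used to guarantee the existence of an atom below every $y > 0$. Note that the lattice is not assumed to be atomistic: if $x$ is not a join of atoms the right-hand side is an empty sum, and the defining recurrence indeed forces $\mu(0,x) = 0$ in that case, so the formula remains correct.
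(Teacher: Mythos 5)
Your argument is correct and complete: it is the standard proof of Rota's crosscut corollary for atoms, verifying that the candidate function $F$ satisfies the defining recurrence $\sum_{x\leq y}F(x)=\delta_{y,0}$ and invoking uniqueness of the M\"obius function. The paper itself offers no proof of this proposition --- it is stated as a recalled ``fundamental result'' with only a citation --- so there is no argument to compare against; your derivation simply supplies the missing details. The two points you flag (the convention $\bigvee\emptyset=0$, and finiteness of $P$ to guarantee an atom below every $y>0$) are exactly the right ones; finiteness is implicit throughout the paper since everything lives in $\mathfrak{C}(n)_\bot$ or $\Pi(n)$, and your closing remark that the formula correctly yields $\mu(0,x)=0$ when $x$ is not a join of atoms (the relevant case in $\mathfrak{C}(n)_\bot$, which is not atomistic) is a useful observation that the paper exploits immediately after in equation~(\ref{eq:mob}).
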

In $\C(N)_\bot$, there are only a few elements representable by atoms. These are
$S\pi^\bot_S$, where $\pi^\bot_S$ is the partition formed by $S$ and
singletons. The unique way to write $S\pi^\bot_S$  is $\bigvee_{i\in
  S}(i\pi^\bot)$. Hence, the M\"obius function over $\C(N)_\bot$, denoted simply
by $\mu$ if no confusion occurs, otherwise by $\mu_{\C(N)_\bot}$, is known for
$(\bot,S\pi)$: 
\begin{equation}\label{eq:mob}
\mu(\bot,S\pi) = \begin{cases}
  (-1)^{|S|}, & \text{if } \pi=\pi^\bot_S\\
  0, & \text{otherwise.}
  \end{cases}
\end{equation}

Let us find $\mu(S\pi,N\{N\})$. Since $[S\{S,S_2,\ldots,S_k\},N\{N\}]\cong
[i\{i,i_2,\ldots,i_k\},K\{K\}]$ (see Proposition~\ref{prop:3} (viii)), we have
$\mu_{\C(n)_\bot}(S\{S,S_2,\ldots,S_k\},N\{N\}) =
\mu_{\C(k)_\bot}(i\pi^\bot,K\{K\})$. We know also that $[i\pi^\bot,N\{N\}]\cong
\Pi(N)$, from which we deduce by Proposition~\ref{prop:mobpart}
that $\mu(i\pi^\bot,N\{N\}) = \mu_{\Pi(n)}(\pi^\bot,\{N\}) = (-1)^{n-1}(n-1)!$. Hence
\begin{equation}\label{eq:mob2}
\mu(S\{S,S_2,\ldots,S_k\},N\{N\}) = (-1)^{k-1}(k-1)!.
\end{equation}

It remains to compute the general expression for $(S'\pi',S\pi)$. We put
$S\pi:=S\{S,S_2,\ldots,S_k\}$ and
$S'\pi':=S'\{S',S_{12},\ldots,S_{1l_1},S_{21},\ldots,
S_{2l_2},\ldots,S_{k1},\ldots,S_{kl_k}\}$, and $k':=\sum_{i=1}^kl_i$. Then
$[S'\pi',S\pi] \cong [i\pi^\bot,S\pi]_{\C(k')_\bot}$, where the subscript
$\C(k')_\bot$ means that elements in the brackets are understood to belong to
$\C(k')_\bot$. Hence we deduce:
\[
\mu(S'\pi',S\pi) = \mu_{\Pi(k')}(\pi^\bot,\pi) = (-1)^{k'- k}(l_1-1)!\cdots(l_k-1)!.
\] 
In summary, we have proved:
\begin{proposition}\label{prop:mobemb}
The M\"obius function on $\C(n)_\bot$ is given by (with the above notation):
\begin{align}
\mu(\bot,S\pi) & = \begin{cases}
  (-1)^{|S|}, & \text{if } \pi=\pi^\bot_S\\
  0, & \text{otherwise.}
  \end{cases}\\
\mu(S'\pi',S\pi) & = (-1)^{k'- k}(l_1-1)!\cdots(l_k-1)!, \text{ for } S'\pi'\sqsubseteq S\pi.
\end{align}
\end{proposition}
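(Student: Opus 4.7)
My plan is to obtain the two formulas by reducing the computation on $\C(n)_\bot$ to already-available results on simpler structures: Proposition~\ref{th:mob} (atomic expansion of the Möbius function from $\bot$) for the first line, and the Möbius function on the partition lattice (Proposition~\ref{prop:mobpart}) for the second line, transported via the interval isomorphisms collected in Proposition~\ref{prop:3}(viii).

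For $\mu(\bot,S\pi)$, I would first recall from Proposition~\ref{prop:3}(iv) that the atoms of $\C(n)_\bot$ are exactly the elements $(i,\pi^\bot)$ with $i\in N$. Using the supremum formula in Proposition~\ref{prop:3}(i), I would then observe that for any nonempty $T\subseteq N$ the join $\bigvee_{i\in T}(i,\pi^\bot)$ collapses to $(T,\pi^\bot_T)$. Consequently the only embedded subsets $(S,\pi)$ that can be written as a join of atoms are those with $\pi=\pi^\bot_S$, and in that case the representing family is unique and has cardinality $|S|$. Plugging this into Proposition~\ref{th:mob} yields $(-1)^{|S|}$ in that case and $0$ otherwise, which is the first formula.

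For $\mu(S'\pi',S\pi)$ with $S'\pi'\sqsubseteq S\pi$, the strategy is to transfer the computation to the partition lattice. The fourth line of Proposition~\ref{prop:3}(viii) identifies $[S'\pi',S\pi]$ with $[i\pi^\bot,S\pi]$ inside $\C(k')_\bot$, and then the third line of the same proposition identifies $[i\pi^\bot,K\{K\}]$ (with $K=[k']$) with $\Pi(k')$. Under this chain, $S'\pi'$ corresponds to $\pi^\bot$ in $\Pi(k')$, while $S\pi$ corresponds to the partition of $[k']$ having $k$ blocks of sizes $l_1,\ldots,l_k$. Since the Möbius function depends only on the isomorphism type of the interval, Proposition~\ref{prop:mobpart}(iii) immediately delivers
\[
\mu(S'\pi',S\pi)=(-1)^{k'-k}(l_1-1)!\cdots(l_k-1)!.
\]

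The main point requiring care is the bookkeeping along this chain of isomorphisms: I must verify that the refinement data describing how each block $S,S_2,\ldots,S_k$ of $\pi$ splits into $l_1,\ldots,l_k$ sub-blocks in $\pi'$ translates, under the composite isomorphism, into exactly the block-size data of the target partition in $\Pi(k')$. Once this correspondence is pinned down the result follows from the cited propositions, and the edge case $S'\pi'=S\pi$ (where $k'=k$ and all $l_i=1$) correctly collapses to $\mu=1$. Note also that the first formula cannot be obtained by specialising the second — since $\bot$ does not admit a representation as an embedded subset $S'\pi'$ — which is why the atomic argument has to be carried out separately.
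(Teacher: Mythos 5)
Your proposal is correct and follows essentially the same route as the paper: the first formula via Proposition~\ref{th:mob} after identifying the atoms $(i,\pi^\bot)$ and observing that only the elements $S\pi^\bot_S$ are joins of atoms (uniquely, by $\bigvee_{i\in S}(i,\pi^\bot)$), and the second formula by transporting the interval $[S'\pi',S\pi]$ to $\Pi(k')$ through the isomorphisms of Proposition~\ref{prop:3}(viii) and invoking Proposition~\ref{prop:mobpart}. The extra remarks on the bookkeeping of block sizes and on why the two cases must be treated separately are accurate but do not change the argument.
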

In particular, $\mu(i\pi^\bot,S\pi) = (-1)^{n-k} (s-1)!(s_2-1)!\cdots(s_k-1)!$.

Using this, the M\"obius transform of any game $v$ on $\C(N)_\bot$ is defined by 
\[
m(S\pi) = \sum_{S'\pi'\sqsubseteq S\pi}\mu(S'\pi',S\pi)v(S'\pi'),\quad\forall S\pi\in\C(N)_\bot.
\] 
Now, $m$ gives the coordinates of $v$ in the basis of unanimity games
defined by:
\[
u_{S\pi}(S'\pi') = \begin{cases}
  1, & \text{if } S'\pi'\sqsupseteq S\pi\\
  0, & \text{ otherwise.}
  \end{cases}
\]
\begin{example}\label{ex:mob3}
Application for $n=3$. Let us compute the M\"obius transform of a game $v$. We
have, for all distinct $i,j,k\in\{1,2,3\}$:
\begin{align*}
m(\bot) & = 0\\
m(i\pi^\bot) & = v(i\pi^\bot)\\
m(ij\{ij,k\}) & = v(ij\{ij,k\}) - v(i\pi^\bot) - v(j\pi^\bot)\\
m(i\{i,jk\}) & = v(i\{i,jk\}) - v(i\pi^\bot)\\
m(123\{123\}) & = v(123\{123\}) - \sum_{i,j}v(ij\{ij,k\}) -
\sum_{i}v(i\{i,jk\}) + 2\sum_{i}v(i\pi^\bot).
\end{align*}
Consider now the following game: $v(123\{123\})=3$, $v(12\{12,3\})=2$,
$v(3\{12,3\})=0$, $v(1\{1,23\})=1$, $v(23\{1,23\})=2$, $v(13\{13,2\})=1$,
$v(2\{13,2\})=1$, $v(2\{13,2\})=1$, $v(1\{1,2,3\})=1$, $v(2\{1,2,3\})=1$, and
$v(3\{1,2,3\})=0$. Let us find its coordinates in the basis of unanimity
games. It suffices to compute $m(S\pi)$ for all $S\pi\in\C(3)$ from the above
formulas. We obtain:
\[
m(1\{1,2,3\})=m(2\{1,2,3\}) = m(23\{1,23\})=1
\]   
and $m(S\pi)=0$ otherwise. Therefore, 
\[
v= u_{1\{1,2,3\}} + u_{2\{1,2,3\}} + u_{23\{1,23\}}.
\]
This decomposition can be checked on the figure of $\C(3)$ (Figure~\ref{fig:3}). 
\end{example}

\section{Belief functions and minitive functions on $\C(n)_\bot$}\label{sec:bel}
Belief functions and minitive functions (i.e., inf-preserving mappings, also
called necessity measures) are well-known in artificial intelligence and
decision making, where the underlying lattice is the Boolean lattice. In the
case of an arbitrary lattice, they have interesting properties, investigated by
Barthélemy \cite{bar00} and the author \cite{gra05}.

Barth\'elemy proved the following.
\begin{proposition}\label{prop:bar1}
Let $L$ be any lattice, and $f:L\rightarrow \mathbb{R}$. If the M\"obius
transform of $f$, denoted by $m$, satisfies $m(\bot)=0$, $m(x)\geq 0$  for all
$x\in L$, and
$\sum_{x\in L}m(x)=1$ (normalization), then $f$ is a belief function.
\end{proposition}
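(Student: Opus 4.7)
The plan is to use Möbius inversion to recover $f$ from $m$ and then verify directly the three requirements of being a belief function: (a) $f(\bot)=0$ and monotonicity, (b) normalization $f(\top)=1$, and (c) $\infty$-monotonicity. The central formula will be
\[
f(x) = \sum_{y\leq x} m(y),\qquad x\in L,
\]
which follows from the very definition of the Möbius transform and inversion on the locally finite poset $L$.

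First I would dispatch the easy parts. The identity $f(\bot)=m(\bot)=0$ is immediate, and $f(\top)=\sum_{y\in L}m(y)=1$ is the normalization hypothesis. For monotonicity, if $x\sqsubseteq x'$, then
\[
f(x')-f(x) = \sum_{y\leq x',\, y\not\leq x} m(y) \geq 0
\]
since $m\geq 0$. Thus $f$ is a normalized capacity.

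The main step is $\infty$-monotonicity. Fix $x_1,\ldots,x_k\in L$, set $K:=\{1,\ldots,k\}$, and let $x:=\bigvee_{i\in K} x_i$. Substituting $f(\bigwedge_{i\in J} x_i)=\sum_{y\leq \bigwedge_{i\in J} x_i} m(y)$ and swapping the order of summation gives
\[
\sum_{\emptyset\neq J\subseteq K}(-1)^{|J|+1} f\Big(\bigwedge_{i\in J} x_i\Big)
= \sum_{y\in L} m(y)\sum_{\emptyset\neq J\subseteq K_y}(-1)^{|J|+1},
\]
where $K_y:=\{i\in K : y\leq x_i\}$, because $y\leq \bigwedge_{i\in J} x_i$ is equivalent to $J\subseteq K_y$. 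The inner alternating sum equals $1$ when $K_y\neq\emptyset$ and $0$ otherwise, by the standard inclusion--exclusion identity $\sum_{\emptyset\neq J\subseteq K_y}(-1)^{|J|+1}=1-(1-1)^{|K_y|}$. Hence the right-hand side reduces to $\sum_{y:\, K_y\neq\emptyset} m(y)$, i.e.\ the sum over those $y$ lying below \emph{some} $x_i$.

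To conclude, I would compare this with $f(x)=\sum_{y\leq x} m(y)$. Any $y$ with $K_y\neq\emptyset$ satisfies $y\leq x_i\leq x$ for the corresponding $i$, so $\{y: K_y\neq\emptyset\}\subseteq\{y : y\leq x\}$. Since $m\geq 0$, dropping terms only decreases the sum, giving
\[
\sum_{\emptyset\neq J\subseteq K}(-1)^{|J|+1} f\Big(\bigwedge_{i\in J} x_i\Big) \;\leq\; \sum_{y\leq x} m(y) \;=\; f\Big(\bigvee_{i\in K} x_i\Big),
\]
which is exactly the $k$-monotonicity inequality. Since $k\geq 2$ was arbitrary, $f$ is $\infty$-monotone, hence a belief function. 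The only real obstacle is making the sum-swap rigorous and correctly identifying the residual index set $\{y:K_y\neq\emptyset\}$; once this is done the nonnegativity of $m$ carries the argument.
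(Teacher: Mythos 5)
Your proof is correct and complete. The paper itself gives no proof of this proposition---it is simply quoted as a result of Barth\'elemy---so there is nothing to compare against; your argument is the standard direct one. Writing $f(x)=\sum_{y\leq x}m(y)$, checking $f(\bot)=0$, $f(\top)=1$ and monotonicity from nonnegativity of $m$, and then reducing the right-hand side of the $k$-monotonicity inequality via the interchange of summation and the inclusion--exclusion identity to $\sum_{y:\,K_y\neq\emptyset}m(y)$, which is dominated by $\sum_{y\leq\bigvee_i x_i}m(y)=f(\bigvee_i x_i)$ because every $y$ below some $x_i$ is below the join, is exactly the right chain of steps; the only implicit hypothesis is that $L$ is finite (so that the sums and the normalization condition make sense), which is the setting of the paper. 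Your argument also makes transparent why the converse can fail (as the paper's subsequent counterexample on $\mathfrak{C}(3)_\bot$ shows): the inequality is strict precisely when some $y\leq\bigvee_i x_i$ with $m(y)>0$ lies below no individual $x_i$, which cannot happen in a Boolean lattice when the $x_i$ are taken to be the relevant coatoms but can happen in a general lattice.
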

The converse of this proposition does not hold in general (it holds for the
Boolean lattice $2^N$). A belief function is \emph{invertible} if its M\"obius
transform is nonnegative, normalized and vanishes at $\bot$. The following counterexample
shows that for $\C(n)_\bot$, there exist belief functions which are not
invertible. 
\begin{example}
Let us take $\C(3)_\bot$, and consider a function $f$ whose values are given on
the figure below.
\begin{figure}[htb]
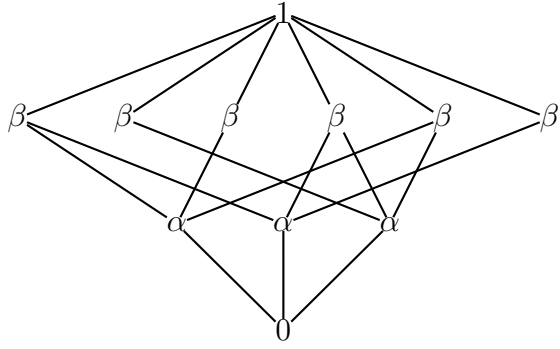

\begin{center}
\psset{unit=0.7cm}
\pspicture(-6,-1)(6,6)
\rput(0,0){\rnode{0}{$0$}}
\rput(-2,2){\rnode{1a}{$\alpha$}}
\rput(0,2){\rnode{2a}{$\alpha$}}
\rput(2,2){\rnode{3a}{$\alpha$}}
\rput(-5,4){\rnode{12b}{$\beta$}}
\rput(-3,4){\rnode{3b}{$\beta$}}
\rput(-1,4){\rnode{1b}{$\beta$}}
\rput(1,4){\rnode{23b}{$\beta$}}
\rput(3,4){\rnode{13b}{$\beta$}}
\rput(5,4){\rnode{2b}{$\beta$}}
\rput(0,6){\rnode{123c}{$1$}}
\ncline{0}{1a}
\ncline{0}{2a}
\ncline{0}{3a}
\ncline{1a}{12b}
\ncline{1a}{1b}
\ncline{1a}{13b}
\ncline{2a}{12b}
\ncline{2a}{23b}
\ncline{2a}{2b}
\ncline{3a}{3b}
\ncline{3a}{13b}
\ncline{3a}{23b}
\ncline{12b}{123c}
\ncline{3b}{123c}
\ncline{1b}{123c}
\ncline{23b}{123c}
\ncline{13b}{123c}
\ncline{2b}{123c}
\endpspicture
\end{center}
\caption{Hasse diagram of $(\mathfrak{C}(3)_\bot,\sqsubseteq)$}
\label{fig:3b}
\end{figure}
Monotonicity implies that $1\geq\beta\geq \alpha\geq 0$. In order to check
$\infty$-monotonicity, from Proposition~\ref{prop:bar} we know that it suffices
to check till 7-monotonicity. We write below the most constraining inequalities
only, keeping in mind that $1\geq\beta\geq \alpha\geq 0$.

2-monotonicity is equivalent to  $\beta\geq 2\alpha$ and $1\geq 2\beta-\alpha$. 
3-monotonicity is equivalent to $1\geq 3\beta - \alpha$. 4-monotonicity is
equivalent to $1\geq 4\beta-3\alpha$. 5-monotonicity is equivalent to $1\geq
5\beta - 4\alpha$, while 6-monotonicity is equivalent to $1\geq
6\beta-9\alpha$. 7-monotonicity does not add further constraints. 

From Example~\ref{ex:mob3}, nonnegativity of the M\"obius transform implies
$\alpha\geq 0$ (atoms), $\beta\geq 2\alpha$ (2nd level), and $m(\top)=6\alpha -
6\beta +1\geq 0$. Then taking $\alpha=0.1$, $\beta=0.28$ make that $f$ is a
belief function, but $m(\top)=-0.08$.
\end{example}  

The last point concerns minitive functions. A \emph{minitive function} on a
lattice $L$ is a real-valued function $f$ on $L$ such that $f(\bot)=0$,
$f(\top)=1$, and
\[
f(x\wedge y) = \min(f(x),f(y)).
\]
Hence a minitive function is a capacity.
The following proposition by Barth\'elemy shows that minitive functions always
exist on $\C(n)_\bot$.
\begin{proposition}
Let $L$ be a lattice and $f:L\rightarrow\mathbb{R}$ such that $f(\bot)=0$,
$f(\top)=1$. Then $f$ is a minitive function if and only if it is an invertible
belief function whose M\"obius transform is nonzero on a chain of $\C(n)_\bot$.  
\end{proposition}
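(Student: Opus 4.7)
The plan is to exploit the M\"obius transform, translating the minitive identity $f(x \wedge y) = \min(f(x), f(y))$ into a structural statement about the support of $m$: namely, that this support is a chain.

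For the forward implication, suppose $f$ is minitive. Applying the minitive identity with $x \leq y$ gives $f(x) \leq f(y)$ (monotonicity), and with $y = \bot$ gives $f \geq 0$. For each $t \in \mathbb{R}$ define the level set $F_t := \{x \in L : f(x) \geq t\}$. Monotonicity shows $F_t$ is an up-set, and the minitive identity shows it is closed under meets; since $L$ is finite, $F_t$ admits a least element $a_t$ and equals $\uparrow a_t$. Enumerating the distinct values of $f$ as $0 = t_0 < t_1 < \cdots < t_N = 1$ and writing $a_k := a_{t_k}$, the strict inclusions $F_{t_0} \supsetneq \cdots \supsetneq F_{t_N}$ produce a strict chain $\bot = a_0 < a_1 < \cdots < a_N$ in $L$.

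Now define $m(a_k) := t_k - t_{k-1}$ for $k \geq 1$ and $m(x) := 0$ otherwise. For any $x \in L$, set $k^*(x) := \max\{k : a_k \leq x\}$; telescoping yields $\sum_{y \leq x} m(y) = t_{k^*(x)}$, and $t_{k^*(x)} = f(x)$ since $f(x) \geq t_k$ iff $x \geq a_k$. By M\"obius inversion this $m$ is the M\"obius transform of $f$. Thus $m \geq 0$, $m(\bot) = 0$, $\sum_x m(x) = 1$, and $m$ is supported on the chain $a_1 < \cdots < a_N$. Proposition~\ref{prop:bar1} then implies $f$ is a belief function, invertible by the definition recalled in the paper.

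For the converse, let $f$ be an invertible belief function whose M\"obius transform $m$ is nonzero only on a chain $c_1 < c_2 < \cdots < c_N$ in $L$. Since $f(x) = \sum_{y \leq x} m(y)$ and the $c_i$ are totally ordered, we have $f(x) = \sum_{i=1}^{k^*(x)} m(c_i)$ with $k^*(x) := \max\{i : c_i \leq x\}$ (set to $0$ when no such $i$ exists). Because $c_i \leq x \wedge y$ iff $c_i \leq x$ and $c_i \leq y$, one has $k^*(x \wedge y) = \min(k^*(x), k^*(y))$, whence $f(x \wedge y) = \min(f(x), f(y))$. The boundary conditions $f(\bot) = 0$ and $f(\top) = 1$ follow from $m(\bot) = 0$ and normalization.

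The main obstacle I anticipate is in the forward direction, specifically the step recognizing that the minitive identity forces the level sets $F_t$ to be principal filters. This is what produces the chain on which the M\"obius mass is concentrated; it is also the only place where the full minitive identity, rather than mere $\infty$-monotonicity, is genuinely used.
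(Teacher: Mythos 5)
The paper gives no proof of this proposition; it is quoted from Barth\'elemy \cite{bar00} and immediately followed only by an informal gloss of the easy direction, so there is no in-paper argument to compare yours against. Your proof is correct and self-contained. The forward direction isolates the right structural fact: the minitive identity makes each level set $F_t$ a meet-closed up-set of the finite lattice, hence a principal filter $\uparrow a_t$, and the distinct values of $f$ then yield a strict chain $\bot=a_0<a_1<\cdots<a_N$; defining $m(a_k)=t_k-t_{k-1}$ and checking $\sum_{y\leq x}m(y)=f(x)$ by telescoping, uniqueness of M\"obius inversion identifies $m$ as the M\"obius transform, which is nonnegative, normalized, vanishes at $\bot$, and is supported on a chain, so Proposition~\ref{prop:bar1} gives that $f$ is an (invertible) belief function. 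The converse is equally sound; the one step worth stating explicitly is that passing from $k^*(x\wedge y)=\min\big(k^*(x),k^*(y)\big)$ to $f(x\wedge y)=\min\big(f(x),f(y)\big)$ requires the partial sums of $m$ along the chain to be nondecreasing, i.e.\ it uses $m\geq 0$; this is precisely where invertibility, and not merely the chain-support hypothesis, enters (a signed $m$ supported on a chain need not give a minitive $f$). Minor cosmetic points: you tacitly assume $L$ finite (consistent with the paper's setting, where the statement is really aimed at $\C(n)_\bot$ despite being phrased for a general $L$), and the proposition's phrase ``nonzero on a chain'' is correctly read as ``supported on a chain.''
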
 
In other words, taking any chain $C$ in $\C(n)$ and assigning nonnegative
numbers on elements of $C$ such that their sum is 1 generates (by
Proposition~\ref{prop:bar1}) a belief function which is a minitive function.

\section{Concluding remarks}
The paper has given a natural structure to embedded subsets, and hence
a better understanding to games
in partition function form. Specifically, the main results from a game theoretic
viewpoint are:
\begin{itemize}
\item The set of embedded subsets forms a lattice, whose structure is much more
  complicated and less easy to handle than the Boolean lattice of coalitions in the classical
  setting. In particular, since the lattice is not distributive, no simple
  decomposition of elements is possible, a fortiori no decompositions in atoms. 
\item The number of maximal chains between any two elements is known. This
  allows the definition of many notions in game theory, like the Shapley value. 
\item The decomposition of games into the
  basis of unanimity games, and hence its Harsanyi dividends, is known.
\item There is no additivity property for such games, hence there is few hope to
  express them in a simpler form. 
\item Infinite monotonicity is no more equivalent to the nonnegativity of the
  M\"obius transform (Harsanyi dividends). 
\end{itemize}
We hope that this work can help in the clarification and  further investigation
on games in partition function form.

\bibliographystyle{plain}

\bibliography{../BIB/fuzzy,../BIB/grabisch,../BIB/general}

\begin{thebibliography}{10}

\bibitem{aig79}
M.~Aigner.
\newblock {\em Combinatorial Theory}.
\newblock Springer Verlag, 1979.

\bibitem{alarru05}
M.~Albizuri, J.~Arin, and J.~Rubio.
\newblock An axiom system for a value for games in partition function form.
\newblock {\em Int. Game Theory Review}, 7:63--72, 2005.

\bibitem{bamo70}
M.~Barbut and B.~Monjardet.
\newblock {\em Ordre et classification}, volume~II.
\newblock Hachette Universit\'e, 1970.

\bibitem{bar00}
J.-P. Barth\'elemy.
\newblock Monotone functions on finite lattices: an ordinal approach to
  capacities, belief and necessity functions.
\newblock In J.~Fodor, B.~{De Baets}, and P.~Perny, editors, {\em Preferences
  and Decisions under Incomplete Knowledge}, pages 195--208. Physica Verlag,
  2000.

\bibitem{bir67}
G.~Birkhoff.
\newblock {\em Lattice Theory}.
\newblock American Mathematical Society, 3d edition, 1967.

\bibitem{bol89}
E.~M. Bolger.
\newblock A set of axioms for a value for partition function games.
\newblock {\em Int. J. of Game Theory}, 18:37--44, 1989.

\bibitem{dapr90}
B.~A. Davey and H.~A. Priestley.
\newblock {\em Introduction to Lattices and Orders}.
\newblock Cambridge University Press, 1990.

\bibitem{clse05}
G.~de~Clippel and R.~Serrano.
\newblock Marginal contributions and externalities in the value.
\newblock Economics Working Papers we057339, Universidad Carlos III,
  Departamento de Economia, 2005.

\bibitem{dono02}
K.~H.~P. Do and H.~Norde.
\newblock The shapley value for partition function form games.
\newblock Technical Report DP 2002-04, CenTER, Tilburg University, 2002.

\bibitem{fuj05a}
Y.~Fujinaka.
\newblock On the marginality principle in partition function form games.
\newblock Technical report, Kobe University, 2004.

\bibitem{fuya99}
Y.~Funaki and T.~Yamato.
\newblock The core of an economy with a common pool resource: A partition
  function approach.
\newblock {\em International Journal of Game Theory}, 28:157--171, 1999.

\bibitem{gest95}
I.~Gessel and R.~Stanley.
\newblock Algebraic enumeration.
\newblock In R.~Graham, M.~Gr\"otschel, and L.~Lov\'asz, editors, {\em Handbook
  of Combinatorics}, pages 1021--1061. North-Holland, 1995.

\bibitem{gra05}
M.~Grabisch.
\newblock Belief functions on lattices.
\newblock {\em Int. J. of Intelligent Systems}, 24:76--95, 2009.

\bibitem{grfu08}
M.~Grabisch and Y.~Funaki.
\newblock A coalition formation value for games in partition function form.
\newblock {\em Working paper}, 2009.

\bibitem{gratz98}
G.~Gr\"atzer.
\newblock {\em General Lattice Theory}.
\newblock Birkh\"auser, 2nd edition, 1998.

\bibitem{mapewe07}
I.~Macho-Stadler, D.~P\'erez-Castillo, and D.~Wettstein.
\newblock Sharing the surplus: An extension of the {S}hapley value for
  environments with externalities.
\newblock {\em J. of Economic Theory}, 135:339--356, 2007.

\bibitem{mye77}
R.~B. Myerson.
\newblock Values of games in partition function form.
\newblock {\em International Journal of Game Theory}, 6:23--31, 1977.

\bibitem{rot64}
{G. C.} Rota.
\newblock On the foundations of combinatorial theory {I}. {T}heory of
  {M}\"obius functions.
\newblock {\em Zeitschrift f\"ur Wahrscheinlichkeitstheorie und Verwandte
  Gebiete}, 2:340--368, 1964.

\bibitem{sha53}
L.~S. Shapley.
\newblock A value for $n$-person games.
\newblock In H.~W. Kuhn and A.~W. Tucker, editors, {\em Contributions to the
  Theory of Games, Vol. II}, number~28 in Annals of Mathematics Studies, pages
  307--317. Princeton University Press, 1953.

\bibitem{sha71}
L.~S. Shapley.
\newblock Core of convex games.
\newblock {\em Int. J. Game Theory}, 1:11--26, 1971.

\bibitem{thlu63}
R.~M. Thrall and W.~F. Lucas.
\newblock $n$-person games in partition function form.
\newblock {\em Naval Research Logistic Quarterly}, 10:281--298, 1963.

\end{thebibliography}

\end{document}